\documentclass[sigconf,natbib=true]{acmart}
\AtBeginDocument{%
  \providecommand\BibTeX{{%
    \normalfont B\kern-0.5em{\scshape i\kern-0.25em b}\kern-0.8em\TeX}}}

\copyrightyear{2024}
\acmYear{2024}
\setcopyright{acmlicensed}\acmConference[SIGIR '24]{Proceedings of the 47th International ACM SIGIR Conference on Research and Development in Information Retrieval}{July 14--18, 2024}{Washington, DC, USA} \acmBooktitle{Proceedings of the 47th International ACM SIGIR Conference on Research and Development in Information Retrieval (SIGIR '24), July 14--18, 2024, Washington, DC, USA}
\acmDOI{10.1145/3626772.3657720} 
\acmISBN{979-8-4007-0431-4/24/07}

\begin{CCSXML}
<ccs2012>
   <concept>
       <concept_id>10002951.10003317</concept_id>
       <concept_desc>Information systems~Information retrieval</concept_desc>
       <concept_significance>500</concept_significance>
       </concept>
   <concept>
       <concept_id>10010147.10010178</concept_id>
       <concept_desc>Computing methodologies~Artificial intelligence</concept_desc>
       <concept_significance>500</concept_significance>
       </concept>
   <concept>
       <concept_id>10002951.10003227.10003351</concept_id>
       <concept_desc>Information systems~Data mining</concept_desc>
       <concept_significance>500</concept_significance>
       </concept>
 </ccs2012>
\end{CCSXML}

\ccsdesc[500]{Information systems~Information retrieval}
\ccsdesc[500]{Computing methodologies~Artificial intelligence}
\ccsdesc[500]{Information systems~Data mining}

\usepackage{algorithm}
\usepackage{algorithmicx}
\usepackage{algpseudocode}

\usepackage{booktabs}
\usepackage{multirow}
\usepackage{subfigure}
\usepackage{makecell}
\usepackage{tikz}
\usepackage{pgfplots}
\usepackage{pgfplotstable}
\usepackage{enumitem}

\begin{document}

\title{GPT4Rec: Graph Prompt Tuning for Streaming Recommendation}

\author{Peiyan Zhang}
\authornote{Both authors contributed equally to this research.}
\affiliation{\institution{Hong Kong University of \\ Science and Technology}\country{Hong Kong}}
\email{pzhangao@cse.ust.hk}

\author{Yuchen Yan}
\authornotemark[1]
\affiliation{\institution{School of Intelligence Science and Technology, Peking University}\city{Beijing}\country{China}}
\email{2001213110@stu.pku.edu.cn}

\author{Xi Zhang}
\affiliation{\institution{Interdisciplinary Institute for Medical Engineering, Fuzhou University}\city{Fuzhou}\country{China}}
\email{zxwinner@gmail.com}

\author{Liying Kang}
\affiliation{\institution{Hong Kong Polytechnic University}\country{Hong Kong}}
\email{lykangc12@gmail.com}

\author{Chaozhuo Li}
\authornote{Chaozhuo Li is the corresponding author}
\affiliation{\institution{Microsoft Research Asia}\city{Beijing}\country{China}}
\email{lichaozhuo1991@gmail.com}

\author{Feiran Huang}
\affiliation{\institution{Jinan University}\city{Guangzhou}\country{China}}
\email{huangfr@jnu.edu.cn}

\author{Senzhang Wang}
\affiliation{\institution{Central South University}\city{Changsha}\country{China}}
\email{szwang@csu.edu.cn}

\author{Sunghun Kim}
\affiliation{\institution{Hong Kong University of \\ Science and Technology}\country{Hong Kong}}
\email{hunkim@cse.ust.hk}

\renewcommand{\shortauthors}{Peiyan Zhang et al.}

\begin{abstract} 

In the realm of personalized recommender systems, the challenge of adapting to evolving user preferences and the continuous influx of new users and items is paramount. Conventional models, typically reliant on a static training-test approach, struggle to keep pace with these dynamic demands. Streaming recommendation, particularly through continual graph learning, has emerged as a novel solution, attracting significant attention in academia and industry. However, existing methods in this area either rely on historical data replay, which is increasingly impractical due to stringent data privacy regulations; or are inability to effectively address the over-stability issue; or depend on model-isolation and expansion strategies, which necessitate extensive model expansion and are hampered by time-consuming updates due to large parameter sets. To tackle these difficulties, we present GPT4Rec, a \textbf{G}raph \textbf{P}rompt \textbf{T}uning method for streaming \textbf{Rec}ommendation. Given the evolving user-item interaction graph, GPT4Rec first disentangles the graph patterns into multiple views. After isolating specific interaction patterns and relationships in different views, GPT4Rec utilizes lightweight graph prompts to efficiently guide the model across varying interaction patterns within the user-item graph. Firstly, node-level prompts are employed to instruct the model to adapt to changes in the attributes or properties of individual nodes within the graph. Secondly, structure-level prompts guide the model in adapting to broader patterns of connectivity and relationships within the graph. Finally, view-level prompts are innovatively designed to facilitate the aggregation of information from multiple disentangled views. These prompt designs allow GPT4Rec to synthesize a comprehensive understanding of the graph, ensuring that all vital aspects of the user-item interactions are considered and effectively integrated. Experiments on four diverse real-world datasets demonstrate the effectiveness and efficiency of our proposal.

\end{abstract}

\keywords{Streaming Recommendation, Continual Learning, Graph Prompt Tuning}


\maketitle

\section{Introduction}
\par 
Recommender Systems (RSs) have become indispensable in shaping personalized experiences across a multitude of domains, profoundly influencing user choices in e-commerce, online streaming, web searches, and so forth~\citep{zhang2023efficiently,guo2022evolutionary, zhou2023exploring,jin2022code,liu2023chatgpt}. RSs not only guide users through an overwhelming array of options but also drive engagement and customer satisfaction, making them critical to the success of digital platforms. Among the diverse techniques employed to decode complex user preferences, Graph Neural Networks (GNNs)~\citep{Wu2019SessionbasedRW,Pan2020StarGN,chen2020handling} stand out as a groundbreaking approach. GNNs adeptly unravel the intricate patterns of user-item interactions, significantly enhancing the precision and effectiveness of recommendations~\citep{li2019adversarial,zhang2023can,zhao2022learning}. 

\par However, these methods deployed in the real world often underdeliver its promises made through the benchmark datasets~\citep{zhang2023survey}. This discrepancy largely stems from their traditional offline training and testing approach~\citep{mi2020ader}. In these scenarios, models are trained on large, static datasets and then evaluated on limited test sets, a process that doesn't account for the dynamic nature of real-world data. In stark contrast, real-world RS are in a state of constant flux, where new user preferences, items, and interactions continually emerge, creates a gap that is essentially the difference in data distributions over time. On one hand, the models that are originally trained on historical data might not be well-equipped to handle such new, diverse data effectively. On the other hand, when these models are updated with the new data, they are at risk of overwriting the knowledge previously acquired—a phenomenon known as \textit{Catastrophic Forgetting}~\citep{chang2017streaming}. 
This issue is notably problematic in RS, where retaining older but pertinent information is pivotal for sustaining a holistic grasp of user preferences and behavior. Consequently, although GNN-based RS models demonstrate considerable prowess, their ability to adapt to the perpetually changing data landscape poses a significant challenge requiring urgent and concentrated efforts.

\par Recent studies have aimed to embrace this challenge, and most of these works are delving into harnessing the potential of continual learning methods~\citep{chen2013terec,diaz2012real,mi2020memory,mi2020ader,qiu2020gag}.
The first line of research~\citep{chaudhry2019continual,isele2018selective,mi2020ader,prabhu2020gdumb,rebuffi2017icarl} relies on a replay buffer to periodically retrain the model using a selection of past samples. However, the effectiveness of such sample-based methods diminishes with a reduced buffer size and becomes impractical in scenarios where using a replay buffer is constrained, such as in situations requiring strict data privacy~\citep{milano2020recommender}. This limitation is crucial: when the buffer fails to represent the complete spectrum of past data, the method struggles to preserve essential historical knowledge, leading to a gap between what is retained and the current data landscape.

The second line of works, model regularization-based methods~\citep{dhar2019learning,hinton2015distilling,hou2019learning,rannen2017encoder}, aims to maintain knowledge by constraining the model’s parameters to prevent significant divergence from previously learned configurations. These parameters are critical as they often encapsulate patterns extracted from historical data. Yet, the challenge arises when new data diverges substantially from these historical patterns. If the model’s parameters are not adequately adaptable to this new information, it risks straying too far from relevant past data, triggering catastrophic forgetting. The last line of works relies on the model isolation and expansion strategies~\citep{golkar2019continual,ostapenko2021continual,qin2021bns}. These strategies isolate old knowledge and create new learning spaces for updated data. Yet, their extensive model expansion often results in increased parameters and time-consuming updates. In essence, while these strategies appear promising, they fail to fully satisfy the key requirement of streaming recommendation: effectively bridging the ever-present gap between evolving new data and past data distributions. It's not just about preventing catastrophic forgetting but also about ensuring effective learning and adaptation to new data. This dual requirement is where these methods fall short, underscoring the need for more sophisticated strategies that can seamlessly integrate evolving data dynamics while retaining essential historical insights, thereby addressing the core challenge of streaming recommendation.

\par In order to tackle the aforementioned challenges, we draw inspiration from the concept of prompt tuning~\citep{liu2023pre}, a new transfer learning technique in the field of natural language processing (NLP). Intuitively, prompt tuning techniques reformulates learning downstream tasks from directly adapting model weights to designing prompts that “instruct” the model to perform tasks conditionally. A prompt encodes task-specific knowledge and has the ability to utilize pre-trained frozen models more effectively than ordinary fine-tuning~\citep{lester2021power,raffel2020exploring,wang2022learning}. This effectiveness stems from the prompts' ability to add contextual layers to the model's understanding, thereby adapting its responses to new data without altering the core model structure. Moreover, prompt tuning stands as a data-agnostic technique. Unlike methods that heavily rely on the data they were trained on, prompt tuning can navigate across different data distributions without being hindered by the gaps typically encountered in evolving datasets. This quality makes it immune to the pitfalls introduced by the gap in data distributions. Thus, prompt tuning emerges as a potent solution for meeting the dual requirements of continual learning in RS. It not only aids in preventing catastrophic forgetting by maintaining the integrity of the model's foundational knowledge but also ensures effective learning and adaptation to new and diverse data patterns.

In particular, applying prompt tuning in continual graph learning for recommender systems is challenging. First of all, existing prompt tuning approaches~\citep{smith2023coda,wang2022learning,razdaibiedina2023progressive,yang2022semantic,hu2023pop} are for data in Euclidean space, e.g., images, texts. In the dynamic user-item interaction graphs, however, changes are not just incremental; they are cascaded and interconnected, profoundly influencing the entire network. For example, the addition or removal of a node triggers a domino effect, altering the states of adjacent nodes and potentially leading to substantial changes across the entire graph. This cascading nature of change implies that incremental updates simultaneously affect multiple levels of relationships within the graph. Therefore, how to disentangle the multiple levels of relationship changes caused by the cascaded changing of graph data becomes a unique challenge for graph prompt tuning. Furthermore, once these relationships are disentangled, a significant task remains in re-aggregating these representations while preserving the overall graph's integrity. Efficiently combining these decoupled elements is vital for maintaining a coherent and accurate representation of the evolving graph. Last but not least, a major concern with existing methods is the lack of theoretical guarantees for the prompt tuning process in dynamic environments. This limitation can lead to unstable and inconsistent training, exacerbated by the continually changing nature of graph data in recommender systems.

To tackle the challenges above, in this paper, we introduce GPT4Rec, a \textbf{G}raph \textbf{P}rompt \textbf{T}uning method for streaming \textbf{Rec}ommendation. To address the challenge of managing cascaded changes, GPT4Rec disentangles the graph into distinct feature and structure views, which is optimized to capture unique characteristics of each semantic relationship within the graph. Central to GPT4Rec's design are three types of prompts tailored for specific aspects of graph data. Firstly, node-level prompts are employed to instruct the model to adapt to changes in the attributes or properties of individual nodes within the graph, ensuring a nuanced understanding of evolving node characteristics. Secondly, structure-level prompts guide the model in adapting to broader patterns of connectivity and relationships within the graph, capturing the dynamic interplay between different graph elements. Finally, view-level prompts are innovatively designed to facilitate the aggregation of information from multiple disentangled views. This approach allows GPT4Rec to synthesize a comprehensive understanding of the graph, ensuring that all vital aspects of the user-item interactions are considered and effectively integrated. The utilization of lightweight graph prompts efficiently guides the model across varying interaction patterns within the user-item graph. This approach is in stark contrast to model-isolation methods, as GPT4Rec's prompt based strategy enables rapid adaptation to new data streams. It efficiently pinpoints and transfers useful components of prior knowledge to new data, thereby preventing the erasure of valuable insights.
Finally, we provide theoretical analysis specifically from the perspective of graph data to justify the ability of our method. We theoretically show that GPT4Rec has at least the expression ability of fine-tuning globally using the whole data. Extensive experiments are conducted on six real-world datasets, where GPT4Rec outperforms state-of-the-art baselines for continual learning on dynamic graphs.

In summary, we make the following contributions:
\begin{itemize}
    \item We propose GPT4Rec, a graph prompt tuning based approach tailored for streaming recommendation. By strategically utilizing node-level, structure-level, and view-level prompts, GPT4Rec effectively guides the model to recognize and incorporate new data trends without overwriting valuable historical knowledge. 
    \item Theoretical analyses affirm that GPT4Rec has at least the expression ability of fine-tuning globally.
    \item We conduct extensive evaluations on four real-world datasets, where GPT4Rec achieves state-of-the-art on streaming recommendation.
\end{itemize}
\section{Related Work}
\subsection{Streaming Recommendation}
Traditional recommender systems, constrained by static datasets, struggle with predicting shifting user preferences and trends due to the dynamic nature of user interactions and the expanding volume of items. Streaming recommendation, a dynamic approach updating both data and models over time, addresses these challenges~\citep{chang2017streaming,chen2013terec,das2007google,devooght2015dynamic,song2017multi,song2008real,wang2018streaming}. While initial efforts focused on item popularity, recency, and trend analysis~\citep{chandramouli2011streamrec,lommatzsch2015real,subbian2016recommendations}, recent advancements integrate collaborative filtering and matrix factorization into streaming contexts~\citep{chang2017streaming,devooght2015dynamic,diaz2012real,rendle2008online}. Further, approaches using online clustering of bandits and collaborative filtering bandits have emerged~\citep{ban2021local,gentile2017context,gentile2014online, li2019improved,li2016collaborative}. The application of graph neural networks (GNNs) in streaming recommendation models is gaining attention for its complex relationship modeling~\citep{ahrabian2021structure,wang2020streaming,wang2022streaming,zhang2023continual}. This shift towards streaming recommendation systems represents a significant advancement in the field, offering a more dynamic and responsive approach to user preference analysis and item suggestion.

\subsection{Continual Learning}
Continual learning (CL) addresses the sequential task processing with strategies to prevent catastrophic forgetting and enable knowledge transfer. The primary algorithms in continual learning are categorized into three groups: experience replay~\citep{chaudhry2019continual,isele2018selective,mi2020ader,prabhu2020gdumb,rebuffi2017icarl}, model regularization~\citep{dhar2019learning,hinton2015distilling,hou2019learning,rannen2017encoder}, and model isolation~\citep{golkar2019continual,ostapenko2021continual,qin2021bns}. Recently, continual graph learning~\citep{zhang2023continual,cai2022multimodal,liu2021overcoming,ma2020streaming,pareja2020evolvegcn,perini2022learning} has emerged, focusing on chronological data in streaming recommendation systems~\citep{ahrabian2021structure,wang2020streaming,wang2021graph,xu2020graphsail}. the focus shifts to handling data that arrives continuously in a chronological sequence, rather than data segmented by tasks. This research diverges from traditional continual learning by emphasizing effective knowledge transfer across time segments, instead of solely focusing on preventing catastrophic forgetting.

\subsection{Graph Prompt tuning}
Prompt tuning, a technique extensively employed in Natural Language Processing (NLP) and Computer Vision (CV), aims to bridge the gap between pre-training tasks and fine-tuning tasks. This methodology has recently seen increased application in graph-based scenarios, highlighting its versatility and effectiveness~\citep{yan2024inductive}. In the realm of graph neural networks (GNNs), several innovative adaptations of prompt tuning have emerged: GPPT~\citep{sun2022gppt} leverages learnable graph label prompts, transforming the node classification task into a link prediction task to mitigate the task type gap. GraphPrompt~\citep{liu2023graphprompt} introduces a universal prompt template to unify all the tasks via a learnable readout function. All-in-One~\citep{sun2023all} proposes an inventive graph token prompt, coupled with a token insertion strategy to align the pre-training and fine-tuning tasks. HGPrompt~\citep{Yu2023HGPROMPTBH} extends the concept of prompt tuning to heterogeneous graphs by designing unique prompts for each node type. In the context of streaming recommendations, our work marks a pioneering effort in utilizing graph prompt tuning. We employ prompts to guide the model in swiftly adapting to new data streams, ensuring the continuous integration of evolving data while maintaining the integrity of previously learned information.
\section{PRELIMINARIES}
In this section, we first formalize the continual graph learning for streaming recommendation. Then we briefly introduce three classical graph convolution based recommendation models used in this paper.

\subsection{Definitions and Formulations}

\textbf{Definition 1. Streaming Recommendation.} Industrial recommender systems handle a continuous influx of user-item interaction data, denoted as $\tilde{D}$. This stream is divided into sequential segments $D_{1},...,D_{t},...,D_{T}$, each covering an equal time period~\citep{wang2022streaming,zhang2023continual}. In each time segment $t$, the model optimizes recommendation performance on $D_{t}$ by incorporating knowledge from previous segments $D_{1},D_{2},...,D_{t-1}$. The performance is evaluated over the entire timeline.


\noindent\textbf{Definition 2. Streaming Graph.} A streaming graph is defined as a series of graphs $G=(G_{1},G_{2},...,G_{t},...G_{T})$, where each graph $G_{t}$ evolves from its predecessor by incorporating changes $\Delta G_{t}$ such that $G_{t} = G_{t-1} + \Delta G_{t}$. At time $t$, the attributed graph $G_{t} = (A_{t}, X_{t})$ consists of an adjacency matrix $A_{t}$ and node feature matrix $X_{t}$. The incremental changes $\Delta G_{t} = (\Delta A_{t}, \Delta X_{t})$ reflect modifications in graph structure and node attributes, including the addition/deletion of nodes and edges.


\noindent\textbf{Definition 3. Continual Graph Learning (CGL) for Streaming Graph.} Given a streaming graph $G=(G_{1},G_{2},...,G_{t},...G_{T})$, the objective of CGL is to sequentially learn the updates $\Delta G_{t}(D_{t})$ while effectively transferring knowledge from historical data to new graph segments. Formally, CGL aims to learn the optimal graph structure $S_{t}$ and parameters $W_{t}$ for each segment $t$, formulated as:
\begin{equation}
    (S^{*}_{t},W^{*}_{t})=\mathop{\arg\min}\limits_{S_{t},W_{t}}\mathcal{L}_{t}(S_{t},W_{t},\Delta G_{t}),
\end{equation}
where $S_{t}$ and $W_{t}$ are elements of the search spaces $\mathcal{S}$ and $\mathcal{W}$, respectively. The function $\mathcal{L}{t}(S{t}, W_{t}, \Delta G_{t})$ denotes the loss for the current segment, evaluated on $\Delta G_{t}$.

\section{Methodology}
In this section, we introduce our GPT4Rec method towards continual graph learning for streaming recommendation. As shown in Figure~\ref{fig:overview}, we first disentangle the complex user-item interaction graphs into multiple views, capturing specific interaction aspects. We then design node-level prompts to focus on individual node attributes and structure-level prompts to address broader connectivity patterns. Afterwards, we propose the cross-view-level prompts to aggregate information from these views, integrating new knowledge efficiently and adaptively.

\begin{figure*}[t]
\centering
\includegraphics[width=0.80\textwidth]{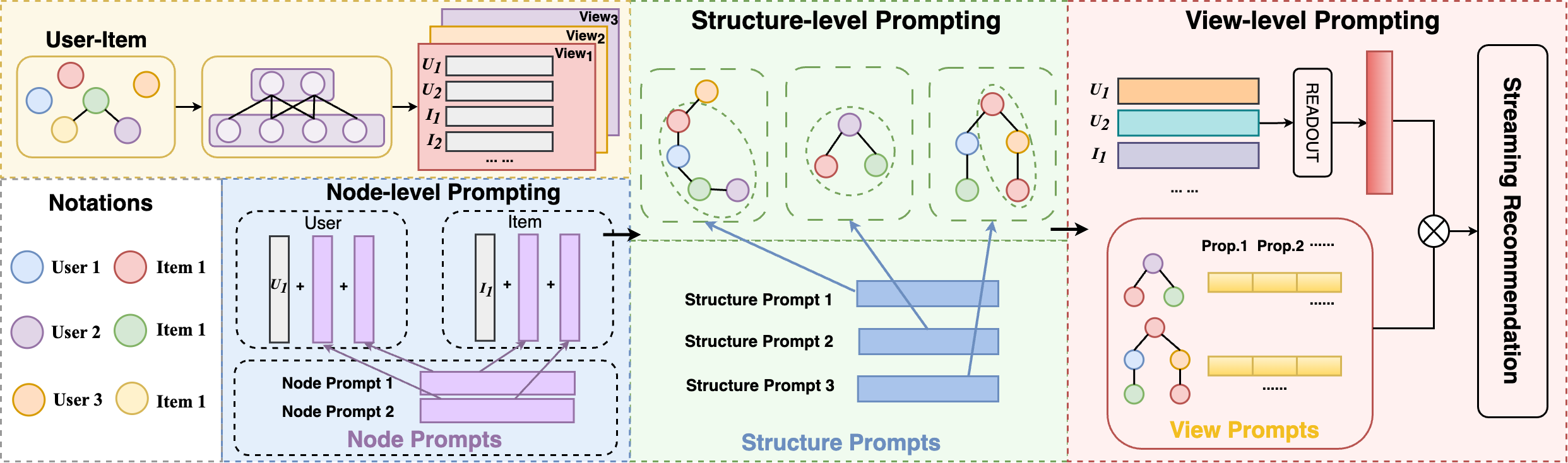}
\caption{Overview of the GPT4Rec. a) We disentangle the complex user-item interaction graph into multiple views. b) We design the node-level prompt to depict the node-level changes. c) We propose the structure-level prompt for the decomposition of sub-graph structure. d) Finally, we present a view-level prompt to aggregate the disentangled views.}
\label{fig:overview}
\end{figure*}

\subsection{Disentangling Strategy for Complex Graphs}

The user-item interaction graphs in recommender systems are inherently complex due to their dynamic and interconnected nature. When a new node is added or an existing one is removed, it doesn't just affect isolated parts of the graph. Instead, these changes can lead to cascading effects throughout the network. These cascading changes mean that any modification in the graph can simultaneously influence multiple relationships. For example, a new item added to the system could cultivate a new preference for several users, altering the existing user-item interaction patterns and potentially reshaping the overall landscape of preferences and recommendations. Such dynamics underscore the complexity of these graphs and the need for a modeling approach that can adapt to and capture these multifaceted and simultaneous changes.

In this context, we divide the graph into multiple views, where each view is tailored to capture specific aspects of the user-item interactions. The disentanglement is achieved through a series of linear transformations:
\begin{equation}
    \tilde{x}_{i}={Linear}_{i}(x),
\end{equation}
where ${Linear}_{i}$ is the linear transformation of the $i$-th view, $i\in\{1,2,...,N\}$. The linear transformations are capable of separating these views while maintaining the overall integrity of the graph. This means the model can isolate and focus on specific aspects without losing sight of the graph's interconnected nature.
This disentanglement allows the model to explore and identify distinct patterns and relationships in the data. For example, one view might capture user-to-item interactions, another might focus on user-to-user relationships, and yet another could delve into item-to-item similarities. With each view focusing on a particular aspect of the graph, the model can adapt more precisely to changes or updates in the data.

\subsection{Prompt Design for Adaptive Learning}
After disentangling the graph patterns into different views, we design node-level prompt and structure-level prompt to capture the comprehensive essence of graph patterns in dynamic RSs. 

\subsubsection{Node-level Prompts.} The node-level prompts primarily target the attributes or properties of individual nodes within the graph. This could include user characteristics in a social network or item properties in a recommendation system. By focusing on this level, GPT4Rec can delve into the intricacies of node-specific data, allowing for a nuanced understanding of individual behaviors or item features. This focus is crucial for tasks where personalization or detailed attribute analysis is key.

Specifically, for each view, the node-level prompts is a set of learnable parameters $P = [p_{1},...,p_{L}]$, where $L$ is the number of node-level prompts. These prompts act as targeted cues that inform the model about how to interpret and integrate new information regarding users or items.

\noindent\textbf{Contextual Guidance Through Weighted Addition.} When new data arrives, these prompts effectively 'instruct' the model by highlighting relevant features or changes in the user-item interactions:
\begin{equation}
    \tilde{x}_{i} = \tilde{x}_{i}+ \Sigma_{j}^{L} \alpha_{ij}p_{j},
\end{equation}
\begin{equation}
    \alpha_{ij}=\frac{exp(\tilde{x}_{i}p_{j})}{\Sigma_{r}exp(\tilde{x}_{i}p_{r})},
\end{equation}
where $\alpha_{ij}$ is the weight for prompt $j$ calculated based on its relevance to the current data point $\tilde{x}_{i}$. This is done using the softmax function, which essentially turns raw scores (obtained by multiplying $x_{i}$ and $p_{j}$) into probabilities. 

These prompts, each encoding specific patterns or relationships, are weighed differently for different nodes. This means that for a given node $x_{i}$, certain prompts will have higher weights ($\alpha_{ij}$) if they are more relevant to that node’s context. This selective amplification allows the model to focus on aspects of the data that are currently most pertinent. For instance, if a particular prompt encodes a pattern that is increasingly common in new data, its weight will be higher for nodes where this pattern is relevant. 

As new data comes in, the relevance of different prompts can change. The model dynamically recalculates the weights ($\alpha_{ij}$) for each node with each new data point, allowing it to adapt its focus continuously. This dynamic process ensures that the model remains responsive to evolving data trends and relationships, integrating new knowledge in a way that's informed by both the new and existing data.

\subsubsection{Structure-level Prompts.}
Alongside the node-level prompts, structure-level prompts are designed to engage with the broader patterns of connectivity and relationship within the graph. These prompts are crucial for understanding and adapting to changes in the overall graph topology, such as the emergence of new interaction patterns or the evolution of existing ones.

The structure-level prompt is designed as follows:  
for each view, we design a set of learnable prompt $Q=[q_{1},...,q_{K}]$ for the edges that adaptively aggregate the structure-level information via message-passing mechanism:
\begin{equation}
    \tilde{x}_{i}=\tilde{x}_{i}+\Sigma_{j\in N(x_{i})}\beta_{ij}u_{ij}, 
\end{equation}
\begin{equation}
    \beta_{ij}=\frac{exp(\tilde{x}_{i}u_{ij})}{\Sigma_{r}\tilde{x}_{i}u_{ir}},
\end{equation}
\begin{equation}
    u_{ij}=\Sigma_{k}^{K}att_{ij,k}q_{k},
\end{equation}
\begin{equation}
att_{ij,k}=\frac{exp({Linear}(\tilde{x}_{i}||\tilde{x}_{j})q_{k})}{\Sigma_{r}exp({Linear}(\tilde{x}_{i}||\tilde{x}_{j})q_{r})},
\end{equation}
where $x_{i}$ and $x_{j}$ are adjacent nodes.

The integration of these prompts within each view facilitates a comprehensive and responsive learning process. By simultaneously addressing both the granular details at the node level and the broader structural dynamics, GPT4Rec ensures a holistic understanding of the graph data. This approach enables the model to effectively learn from and adapt to the continually evolving landscape of user-item interactions in streaming recommendation.

\subsection{Aggregation of Disentangled Representations}
The aggregation of information from multiple disentangled views is crucial for providing a comprehensive understanding of the dynamic and interconnected user-item interactions. 

\subsubsection{Initial strategy.}

One straightforward approach to this aggregation is the application of an attention mechanism. For the disentangled views, the aggregation can be formulated as:
\begin{equation}
    \hat{x}_{i}=Atten({Linear}(x_{i}), [\tilde{x}_{i,1},...,\tilde{x}_{i,n}])=x_{i} + \Sigma_{j}\gamma_{i,j}\tilde{x}_{i,j},
\end{equation}
\begin{equation}
    \gamma_{i,j}=\frac{exp({Linear}(x_{i})\tilde{x}_{i,j})}{\Sigma_{r}exp({Linear}(x_{i})\tilde{x}_{i,r})},
\end{equation}
However, this approach may not fully account for the evolving nature of user-item interactions, especially with the introduction of new data. As the graph's structure changes, the relevance of different views and their interrelations can shift, potentially rendering the existing fixed attention weights less effective.

\subsubsection{Cross-view-level prompts for aggregation.}

To enhance the model's efficiency and adaptability in the face of these dynamic changes, GPT4Rec incorporates Cross-View-Level Prompts for dynamic adaptation. This approach centers on updating a small set of `codebook' prompts $Z=[z_{1},...,z_{N}]$, rather than relearning the entire model's parameters. These prompts serve as dynamic modifiers to the attention mechanism, allowing the model to adapt its focus efficiently:
\begin{equation}
    \hat{x}_{i}=att(prompt(x_{i}), [\tilde{x}_{i,1},...,\tilde{x}_{i,N}])=x_{i} + \Sigma_{j}^{N}\epsilon_{(i,z,j)}\tilde{x}_{i,j},
\end{equation}
\begin{equation}
    \epsilon{(i,z,j)}=\frac{exp(prompt(x_{i})\tilde{x}_{i,j})}{\Sigma_{r}exp(prompt(x_{i})\tilde{x}_{i,r})} = \frac{exp({Linear}(x_{i}+z_{i})\tilde{x}_{i,j})}{\Sigma_{r}exp({Linear}(x_{i}+z_{i})\tilde{x}_{i,r})},
\end{equation}

In this enhanced aggregation process, the prompts subtly adjust the attention weights, reflecting the current state and relationships within the graph. This strategy maintains the model's stability while enabling it to respond dynamically to new data, ensuring the final node embedding $\hat{x}$ remains relevant and accurate over time.

\subsection{Discussions}
In this section, we discuss the differences between graph prompt tuning, as employed in GPT4Rec, and traditional model-isolation-expansion methods~\citep{golkar2019continual,ostapenko2021continual,qin2021bns}, particularly addressing the nature of knowledge storage and adaptation in these approaches.differences between graph prompt tuning, as employed in GPT4Rec, and traditional model-isolation-expansion methods, particularly addressing your queries about the nature of knowledge storage and adaptation in these approaches. The core distinction lies in how each approach integrates new knowledge and preserves existing information.

Traditional model-isolation-expansion methods typically involve expanding the model's capacity to accommodate new information. This often means adding new layers or nodes, effectively increasing the model's size and complexity. While this approach can be effective in integrating new knowledge, it often requires significant resources and can lead to a bloated model. The expansion needs to be substantial enough to capture the new information, which might not be efficient or scalable in the long term. These methods can sometimes struggle with the delicate balance of preserving existing knowledge while incorporating new data. The expansion can dilute the model's original understanding, potentially leading to issues like overfitting to recent data at the expense of older, yet still relevant, insights.

On the contrary, graph prompt tuning doesn't simply expand the model's space to store new knowledge. Instead, it introduces a set of contextually adaptive prompts that act as conduits for the new information. These prompts do not store knowledge in the traditional sense; they modify how the model interprets and processes incoming data. The prompts in GPT4Rec provide nuanced guidance, subtly adjusting the model's focus and understanding based on the current context. They act as dynamic, lightweight 'instructors' that align new data with the model's existing knowledge base. This is achieved without the need for extensive expansion of the model's structure, ensuring efficiency and agility. The key here is adaptability. The prompts are designed to be flexible, adjusting their influence based on the relevance to new data. This allows GPT4Rec to seamlessly integrate new insights while maintaining the integrity of previously learned information, thus avoiding catastrophic forgetting.

\subsection{Theoretical Analysis}
We conduct theoretical analysis to guarantee the correctness of the proposed  graph prompt tuning algorithm on dynamic graphs. The conclusion is achieved by the following theorem with its proofs.
\begin{theorem}
GPT4Rec has at least the expression ability of fine-tuning globally using the whole data.
\end{theorem}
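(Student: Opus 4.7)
The plan is to prove the set inclusion $\mathcal{F}_{\mathrm{FT}} \subseteq \mathcal{F}_{\mathrm{GPT}}$, where $\mathcal{F}_{\mathrm{FT}}$ denotes the set of node embedding maps the backbone can produce after an arbitrary update of all its weights, and $\mathcal{F}_{\mathrm{GPT}}$ denotes the set producible by freezing the backbone and tuning only the prompt banks $P$, $Q$, and $Z$. Since ``at least the expression ability'' is a statement about reachable function classes, establishing this inclusion is exactly what the theorem demands.

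First I would decompose the effect of any fine-tune $W \mapsto W^{\star}$ into an additive perturbation at each of the three stages where GPT4Rec's prompts act: a node-local correction $\delta^{\mathrm{node}}_i$ after the view disentangler, an edge-local correction $\delta^{\mathrm{edge}}_{ij}$ at the message-passing step, and a view-aggregation correction that reshapes the attention weights $\gamma_{i,j}$ into a target $\gamma^{\star}_{i,j}$. The node-level prompts inject a term $\sum_{j=1}^{L} \alpha_{ij} p_j$ that lies in $\mathrm{span}\{p_1,\dots,p_L\}$; choosing $L \geq d$ and letting $\{p_j\}$ include a basis of the embedding space $\mathbb{R}^d$ makes this span equal the whole space, so $\delta^{\mathrm{node}}_i$ is realizable. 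The structure-level prompts produce edge messages $u_{ij}$ that are convex combinations of $\{q_k\}$ with data-dependent weights, so the same basis argument applied to $\{q_k\}$ absorbs $\delta^{\mathrm{edge}}_{ij}$. At the view-aggregation stage, $z_i$ additively shifts the attention logits through $\mathrm{Linear}(x_i + z_i)$, and since the softmax preimage of any weight vector in the interior of the simplex is nonempty, one can solve for $z_i$ to realize $\gamma^{\star}_{i,j}$.

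The main obstacle is that the softmax coefficients $\alpha_{ij}$, $\beta_{ij}$, and $\epsilon_{(i,z,j)}$ are themselves functions of the prompts being chosen, so ``which prompt is selected'' and ``what the prompt encodes'' are not independent free parameters. I would handle this with a margin construction: enlarge each bank so that, for every node or edge where a correction $\delta_i$ is needed, a designated prompt $p_{j(i)}$ carries that correction and has its inner-product score boosted above the others by a scalar $\lambda$. As $\lambda \to \infty$ the softmax concentrates on $p_{j(i)}$, and the additive contribution converges to $\delta_i$ exactly; for any prescribed accuracy $\varepsilon > 0$ a finite $\lambda$ suffices, giving $\mathcal{F}_{\mathrm{FT}} \subseteq \overline{\mathcal{F}_{\mathrm{GPT}}}$, which is the sense in which prompt tuning matches global fine-tuning. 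Composing the three absorptions in order—node, then structure, then view—yields the theorem, and the argument also exposes the natural tightness condition that $L$, $K$, and $N$ must be at least the embedding dimension for the bound to be attained.
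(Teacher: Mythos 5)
Your proposal takes a genuinely different route from the paper. The paper does not reason about reachable function classes at all: its proof is a short loss-decomposition argument. It upper-bounds the global fine-tuning objective at time $t+1$ by the sum of a loss term evaluated on the data increment $\Delta x_i^{t}$ and a loss term on the old data under the frozen, already-optimal $\theta_t$, and then observes that the prompt-tuning objective with $\theta_t$ frozen optimizes exactly the increment term, i.e., ``the gap from the optimal.'' Whatever one thinks of the rigor of that step, it is a claim about optimization objectives, not about the inclusion $\mathcal{F}_{\mathrm{FT}}\subseteq\mathcal{F}_{\mathrm{GPT}}$ that you set out to prove, which is a considerably stronger statement than the paper itself attempts to justify.

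More importantly, your argument has a concrete gap at the margin construction. In GPT4Rec the node-level contribution is $\sum_{j}\alpha_{ij}p_j$ with $\alpha_{ij}=\mathrm{softmax}_j(\tilde{x}_i p_j)$: the \emph{same} vector $p_j$ is simultaneously the key (it determines the selection score $\tilde{x}_i p_j$) and the value (it is what gets added). You cannot ``boost the inner-product score above the others by a scalar $\lambda$'' while holding the delivered correction fixed at $\delta_i$, because there is no separate key parameter to boost; scaling $p_{j(i)}$ scales both the logit and the contribution, so as $\lambda\to\infty$ the added term diverges rather than converging to $\delta_i$. Concretely, to have the designated prompt win for node $i$ and deliver $\delta_i$ you need $p_{j(i)}\approx\delta_i$ together with $\tilde{x}_i\cdot\delta_i>\tilde{x}_i\cdot\delta_{i'}$ for every other designated prompt, a system of inequalities that is infeasible in general (e.g., whenever $\delta_i$ is negatively correlated with $\tilde{x}_i$ while some $\delta_{i'}$ is positively correlated). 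A secondary issue: $\sum_j\alpha_{ij}p_j$ is a \emph{convex} combination, so taking $\{p_j\}$ to ``include a basis'' yields only the simplex on those vectors, not their span; covering arbitrary corrections requires a hull argument with scaled prompts, which again collides with the key/value coupling. Finally, the premise that an arbitrary update of all backbone weights decomposes into additive perturbations located exactly at the three prompt-injection points is asserted rather than proved; a weight change deep in the GNN composes nonlinearly with subsequent layers and is not, in general, equivalent to an additive correction in the input or message space. As written, the inclusion is not established.
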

\begin{proof}
Suppose the model is updated at the time $t$, which means the model parameter ${\theta}_{t}$ is optimal. At time $t+1$, the global fine-tuning process is as follows (taking $x_{i}$ as the example): 
\begin{equation}
    \mathop{\arg\min}\limits_{\theta_{t+1}} L(f_{\theta_{t+1}}(x_{i}^{t+1}), y_{i}^{t+1}),
    \label{eq:orignal}
\end{equation}
where we use $\theta_{t}$ to initialize $\theta_{t+1}$ and $L$ is the loss function.  The optimization has the upper bound as\cite{zhang2023continual}:

\begin{equation}
    \mathop{\arg\min}\limits_{\theta^{t+1}} L(f_{\theta_{t+1}}(\Delta x_{i}^{t}), y_{i}^{t+1})+ L(f_{\theta_{t}}(x_{i}^{t}),y_{i}^{t}),
    \label{eq:diff}
\end{equation}
where $\Delta x_{i}^{t}$ denotes the node data gap, Equation~\ref{eq:diff} equals to Equation~\ref{eq:orignal} when $\theta_{t+1}=0$ and, representing the initialization process.

For the optimization of prompt, the optimization process is:
\begin{equation}
\begin{aligned}
    & \mathop{\arg\min}\limits_{{promp}^{t+1}}L(f_{\theta_{t}}(x_{i}^{t+1}+\Sigma_{j}\epsilon_{i,j}{promp}(x_{i}^{t+1})), y_{i}^{t+1}), \\
    \Leftrightarrow & \mathop{\arg\min}\limits_{{promp}^{t+1}}L(f_{\theta_{t}}(\Delta x_{i}^{t+1}+\Sigma_{j}\epsilon_{i,j}{promp^{t+1}}(x_{i}^{t+1})+,y_{i}^{t+1}) + \\
    & L(f_{\theta^{t}}(x_{i}^{t}, y_{i}^{t+1}), \\
\end{aligned}
\end{equation}
which is equivalent to we fix $\theta_{t}$ and optimize Equation~\ref{eq:orignal}. Namely, we optimize the gap from the optimal.
Therefore, GPT4Rec has at least the expression ability of global fine-tuning with the whole data.
\end{proof}


\section{Experiments}
In this section, we conduct experiments on four real-world time-stamped recommendation datasets to evaluate our proposal. 

\subsection{Experiment Settings}
\label{sec:expe_setup}
\subsubsection{Datasets.} We conduct experiments on four datasets (\textit{i.e.,} Neflix\footnote{\noindent https://academictorrents.com/details/9b13183dc4d60676b773c9e2cd6de5e5542cee9a}, Foursquare\footnote{\noindent https://sites.google.com/site/yangdingqi/home/foursquare-dataset}~\citep{yang2019revisiting,yang2020lbsn2vec++} and Taobao2014\footnote{\noindent https://tianchi.aliyun.com/dataset/46} and Taobao2015\footnote{\noindent https://tianchi.aliyun.com/dataset/dataDetail?dataId=53} from from Alibaba’s M-Commerce platforms) from three different domains (i.e., social media, points-of-interests , and e-commerce). Following~\cite{he2023dynamically,yang2019revisiting,yang2020lbsn2vec++}, we use the average entity overlapping rate (AER) between segments to assess data stream stability; higher AER indicates greater stability. Data in each segment is divided into training, validation, and test sets in an 8:1:1 ratio. The statistics of the four datasets are summarized in Table~\ref{tab:dt_descrip}.

\begin{table}[t]
\centering
  \caption{Statistics of datasets used in the experiments.}
  \label{tab:dt_descrip}
  \resizebox{0.48\textwidth}{!}{
  \begin{tabular}{c c c c c}
    \toprule
    Statistic & \textit{Tb2014} & \textit{Tb2015} & \textit{Netflix} & \textit{Foursquare}\\
    \midrule
    No. of Users & 8K & 192K & 301K & 52K \\
    No. of Items & 39K & 10K & 9K & 37K  \\
    No. of Interactions  & 749K & 9M & 49M & 2M  \\
    Time Span & 31 \text{days} & 123 \text{days} & 74 \text{months} & 22 \text{months}  \\
    AER & 35.5\% & 26.0\% & 58.4\% & 60.0\%  \\
    \bottomrule
  \end{tabular}
  }
\end{table}

\subsubsection{Baselines.}
\par We compare GPT4Rec with three types of baselines: (1) experience replay-based baselines, which includes Inverse Degree Sampling (Inverse)~\citep{ahrabian2021structure} and ContinualGNN~\citep{wang2020streaming}. (2) knowledge distillation-based baselines, which includes Topology-aware Weight Preserving (TWP)~\citep{liu2021overcoming}, GraphSAIL~\citep{xu2020graphsail}, SGCT~\citep{wang2021graph}, MGCT~\citep{wang2021graph} and LWC-KD~\citep{wang2021graph}. (3) parameter isolation-based baselines: DEGC~\citep{he2023dynamically}. (4) Vanilla Finetune baseline that initializes with parameters from the previous segment and fine-tunes using only the current segment's data.

\subsubsection{Reproducibility.} We apply grid search to find the optimal hyper-parameters for each model.  The ranges of hyper-parameters are $\{32, 64, 96, 128\}$ for size $L$ of node-level prompts $P$, the size $K$ of structure-level prompts $Q$ and the size $N$ of cross-view-level prompts $Z$. The range of disentangled view number is $\{2,4,8,16\}$. Adam optimizer~\citep{kingma2014adam}  is employed to minimize the training loss. Other parameters are tuned on the validation dataset and we save the checkpoint with the best validation performance as the final model. We use the same evaluation metrics \textbf{Recall@K} (abbreviated as \textbf{R@K}) and \textbf{NDCG@K} (abbreviated as \textbf{N@K}) following previous studies~\citep{xu2020graphsail,wang2021graph,he2023dynamically}.
All models are run five times with different random seeds and reported the average on a single NVIDIA GeForce RTX 3090 GPU.


\begin{table*}[t]
\huge
\centering
\caption{The average performance with MGCCF as our base model. The numbers in bold indicate statistically significant improvement (p \textless { .01}) by the pairwise t-test comparisons over the other baselines.}
\label{tab:overall}
\renewcommand{\arraystretch}{1.5}
\resizebox{\linewidth}{!}{
\begin{tabular}{c|l|cccccccccc|c|c}
\hline
         \multirow{2}*{Dataset} & \multirow{2}*{Metric} & (a) & (b) & (c) & (d) & (e) & (f) & (g) & (h) & (i) & (j) & (k)& (l)\\
         \cline{3-14}
         &  & Finetune & Uniform$^\star$ & Inverse$^\star$ & ContinualGNN$^*$ & TWP$^\dagger$ & GraphSAIL$^\dagger$ & SGCT$^\dagger$ & MGCT$^\ddagger$ & LWC-KD & DEGC & GPT4Rec & Improv.\\
         \hline
         \hline
    
\multirow{2}{*}{Taobao2014}     
& Recall@20        & 0.0412& 0.0308 & 0.0323  & 0.0311  & 0.0398 & 0.0395 & 0.0423     &  0.0421 & 0.0440 & \underline{0.1082}  &  \textbf{0.1127} & +4.16\%  \\\cline{2-14}
                          & NDCG@20        & 0.0052 & 0.0038 & 0.0039  & 0.0035  & 0.0050 & 0.0051   &  0.0054 & 0.0055 & 0.0059   &  \underline{0.0142}    &  \textbf{0.0149}   &  +4.93\%  \\
\midrule
\multirow{2}{*}{Taobao2015}     
& Recall@20        & 0.4256 & 0.4194 & 0.4217  & 0.4203  & 0.4320 & 0.4371 & 0.4411 & 0.4446   &  0.4512 & \underline{0.4892}    & \textbf{0.5018}   &  +2.58\%   \\\cline{2-14}
                          & NDCG@20        & 0.0117 & 0.0113  & 0.0114  & 0.0111  & 0.0122 & 0.0126   &  0.0129 & 0.0131 & 0.0139 & \underline{0.0167}     &  \textbf{0.0172}   & +2.99\%   \\
\midrule
\multirow{2}{*}{Netflix}   
& Recall@20        & 0.3359 & 0.3298 & 0.3321  & 0.3089  & 0.3428 & \underline{0.3470}   &  0.3300 & 0.3252 & 0.2616    & 0.2949 &\textbf{0.3508}   & +1.10\%  \\\cline{2-14}
                          & NDCG@20        & 0.0580 & 0.0533 & 0.0536   & 0.0491  & 0.0580 & \underline{0.0583}    &  0.0572 & 0.0562 & 0.0482 & 0.0520  &  \textbf{0.0589}   & +1.03\%   \\
\midrule
\multirow{2}{*}{Foursquare}     & Recall@20        & 0.1154 & 0.1024 & 0.1063  & 0.1056  & 0.1048 & 0.1086   &  0.1255 &0.1192 & 0.1280 &\underline{0.1425}    & \textbf{0.1477}   & +3.65\%    \\\cline{2-14}
                          & NDCG@20        & 0.0115 & 0.0101 & 0.0107   & 0.0098  & 0.0104 & 0.0110 &0.0137    &  0.0127 & 0.0144 & \underline{0.0178}  &  \textbf{0.0185}   & +3.93\%  \\
\bottomrule
\end{tabular}}
\end{table*}

\subsection{Comparison with Baseline Methods}
\label{sec:overall_performance}
In Table~\ref{tab:overall}, we
show the average performance of different methods on four datasets
while choosing MGCCF as the base GCN recommendation model.

The traditional Finetune method, for instance, inherits and fine-tunes parameters from previous data segments. While this can be effective for incremental updates, it often leads to catastrophic forgetting when new learning overshadows previously acquired knowledge. GPT4Rec circumvents this issue through its adaptive integration of new data, preserving historical context alongside new insights. Experience replay methods like Uniform Sampling and Inverse Degree Sampling sample historical data to combine with new information. However, they may not always strike the right balance between old and new data, potentially missing nuanced changes in user-item interactions. GPT4Rec's prompt-based strategy offers a more precise response to evolving data patterns, ensuring a seamless blend of historical and current user preferences. Knowledge distillation and experience replay techniques employed in ContinualGNN, TWP, GraphSAIL, SGCT, MGCT, and LWC-KD focus on pattern consolidation. These methods can be effective but may not fully capture the dynamic nature of user-item interactions in streaming scenarios. GPT4Rec's dual-prompt strategy, responsive to both node-level and structure-level changes, provides a more granular understanding of evolving preferences. DEGC, which models temporal preferences and performs historical graph convolution pruning and expanding, is adept at isolating long-term preferences. However, it might not be as nimble in adapting to quick short-term shifts. In contrast, GPT4Rec's flexible framework allows for real-time adaptation to both long-term and immediate changes, offering a comprehensive view of user preferences.

GPT4Rec distinguishes itself by its efficient integration of new knowledge via prompts, making it both streamlined and resource-efficient. Moreover, GPT4Rec excels in maintaining a critical balance between preserving historical data and adapting to emerging trends. This balance is essential in dynamic streaming environments where both historical continuity and responsiveness to new patterns are necessary for accurate recommendations. The model achieves a more precise understanding of user-item relationships through its sophisticated use of node-level and structure-level prompts. These prompts enable GPT4Rec to adjust its recommendations based on the context of each interaction, considering both the individual characteristics of nodes (such as specific user preferences and item attributes) and the overall structure of the graph (such as the connectivity and clustering of nodes). This contextual sensitivity ensures that the model not only captures but also effectively interprets the complex dynamics within the user-item graph, leading to more precise and relevant recommendations.


\begin{table}
\centering
  \caption{The average performance with NGCF as our base model. The numbers in bold indicate statistically significant improvement (p \textless { .01}) by the pairwise t-test comparisons over the other baselines.}
  \label{tab:ngcf}
  \begin{tabular}{c|c c|c c}
    \hline
    \multirow{2}*{Model} & \multicolumn{2}{c|}{\textit{Taobao2014}} & \multicolumn{2}{c}{\textit{Netflix}} \\\cline{2-5}
              ~ & R@20 & N@20 & R@20 & N@20 \\
    \hline
    Finetune & 0.0304 & 0.0040 & 0.3131 & 0.0541 \\\hline
    Uniform & 0.0340 & 0.0038 & \underline{0.3263} & 0.0525 \\\hline
    Inverse & 0.0347 & 0.0039 & 0.3256 & 0.0518 \\\hline
    ContinualGNN & 0.0338 & 0.0036 & 0.3047 & 0.0479 \\\hline
    TWP & 0.0358 & 0.0047 & 0.3159 & 0.0531 \\\hline
    GraphSAIL  & 0.0318 & 0.0042 & 0.3245 & \underline{0.0554} \\\hline
    SGCT & 0.0350 & 0.0046 & 0.3044 & 0.0533 \\\hline
    MGCT & 0.0346 & 0.0045 & 0.2957 & 0.0511 \\\hline
    LWC-KD & 0.0380 & 0.0050 & 0.2496 & 0.0454 \\\hline
    DEGC & \underline{0.0961} & \underline{0.0123} & 0.2713 & 0.0486\\\hline
    GPT4Rec & \textbf{0.0972} & \textbf{0.0129} &  \textbf{0.3341} & \textbf{0.0573} \\\hline
  \end{tabular}
\end{table}

\subsection{Method Robustness Analysis}
To evaluate the robustness of our method across various datasets and different GNN backbones, we use different GNN models as the backbones of GPT4Rec, and evaluate on the Netflix and Taobao2014 datasets. The results can be found in Tables~\ref{tab:ngcf} and~\ref{tab:lightgcn}. 

We observe that our method consistently outperform baselines on both datasets. Moreover, we find that our GPT4Rec can be combined with different GNN backbones properly. For example, on Taobao2014 dataset, when using NGCF as the backbone GNN, GPT4Rec outperforms the Finetune baseline by 6.71\% and 5.91\% in terms of Recall@20 and NDCG@20, respectively.  These results underscore the robustness and adaptability of GPT4Rec to different datasets and GNN backbones, indicating its potential for broad applicability in diverse recommendation scenarios.


The underlying reasons for these improvements and the robust nature of GPT4Rec can be attributed to several factors. Firstly, the model's unique prompt-based approach allows for a more context-aware adaptation to evolving user preferences and item characteristics. This approach ensures that the recommendations are not only accurate but also relevant to the current data landscape. Secondly, GPT4Rec's ability to dynamically integrate new information while preserving valuable historical insights helps maintain a balance that is crucial for the accuracy and relevance of recommendations in continually changing environments. Moreover, GPT4Rec's flexible framework adapts effectively to the inherent characteristics of different GCN models. Whether it's the complicated user-item interaction modeling in NGCF or the simplified yet efficient structure of LightGCN, GPT4Rec enhances these base models by effectively addressing their limitations and capitalizing on their strengths.

\begin{table}
\centering
  \caption{The average performance with LightGCN as our base model. The numbers in bold indicate statistically significant improvement (p \textless { .01}) by the pairwise t-test comparisons over the other baselines.}
  \label{tab:lightgcn}
  \begin{tabular}{c|c c|c c}
    \hline
    \multirow{2}*{Model} & \multicolumn{2}{c|}{\textit{Taobao2014}} & \multicolumn{2}{c}{\textit{Netflix}} \\\cline{2-5}
              ~ & R@20 & N@20 & R@20 & N@20 \\
    \hline
    Finetune & 0.0339 & 0.0040 & 0.3179 & 0.0537 \\\hline
    Uniform & 0.0377 & 0.0041 & \underline{0.3289} & 0.0533 \\\hline
    Inverse & 0.0386 & 0.0042 & 0.3275 & 0.0530 \\\hline
    ContinualGNN & 0.0382 & 0.0041 & 0.3035 & 0.0475 \\\hline
    TWP & 0.0338 & 0.0040 & 0.3204 & 0.0542 \\\hline
    GraphSAIL  & 0.0342 & 0.0042 & 0.3282 & \underline{0.0544} \\\hline
    SGCT & 0.0342 & 0.0043 & 0.3073 & 0.0519 \\\hline
    MGCT & 0.0357 & 0.0047 & 0.2983 & 0.0516 \\\hline
    LWC-KD & 0.0402 & 0.0053 & 0.2571 & 0.0461 \\\hline
    DEGC & \underline{0.0975} & \underline{0.0125} & 0.2776 & 0.0491\\\hline
    GPT4Rec & \textbf{0.0979} & \textbf{0.0131} &  \textbf{0.3348} & \textbf{0.0567} \\\hline
  \end{tabular}
\end{table}

\begin{figure}[h]
\centering
\includegraphics[width=0.40\textwidth]{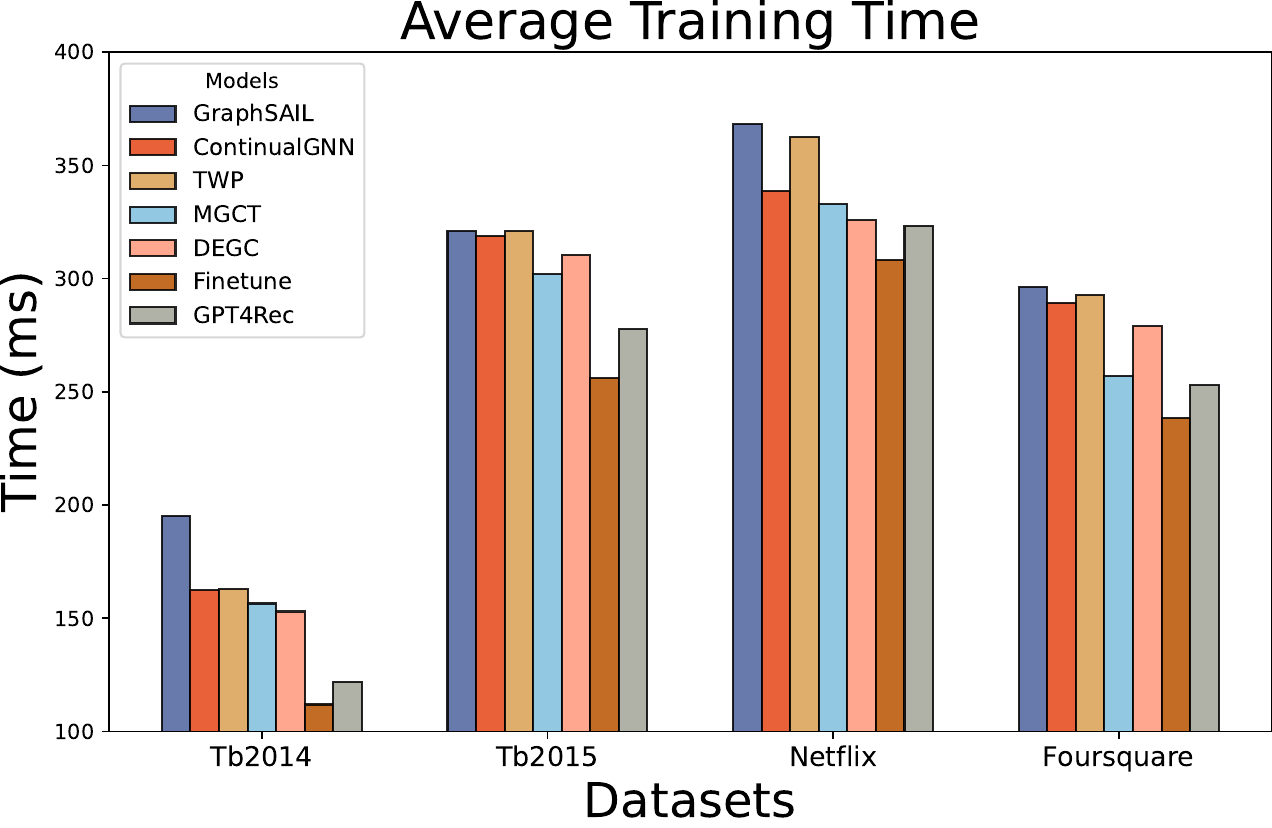}
\caption{Average training time per epoch}
\label{figure:efficiency}
\end{figure}

\subsection{Efficiency}
\par Figure \ref{figure:efficiency} illustrates the average training time per epoch of GPT4Rec in comparison with various baseline models. The results reveal that GPT4Rec not only matches the efficiency of the Finetune approach but also outperforms several other advanced models in terms of training speed. This high efficiency can be primarily attributed to the use of lightweight graph prompts. These prompts, despite their minimal computational demands, play a crucial role in seamlessly integrating new data into the model. By leveraging these compact yet powerful prompts, GPT4Rec avoids the need for extensive retraining or large-scale parameter adjustments that other models typically require. This streamlined approach ensures rapid adaptability to new information, significantly reducing computational overhead and training time, thereby enhancing overall efficiency.

\begin{figure}[t]
\centering
\subfigure[Recall@20]{
\includegraphics[width=0.45\textwidth]{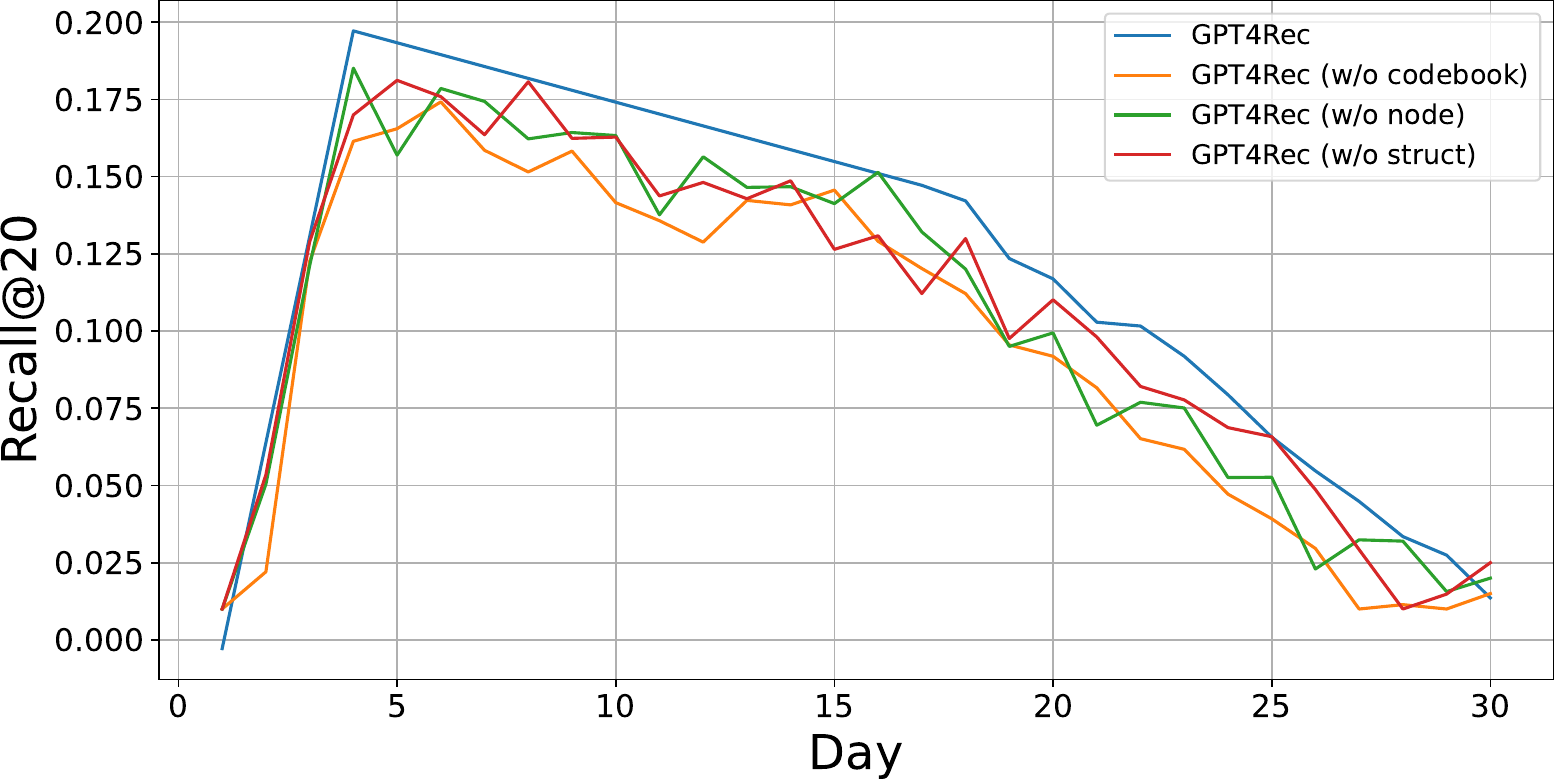}
\label{fig:recall}
} \\
\subfigure[NDCG@20]{
\includegraphics[width=0.45\textwidth]{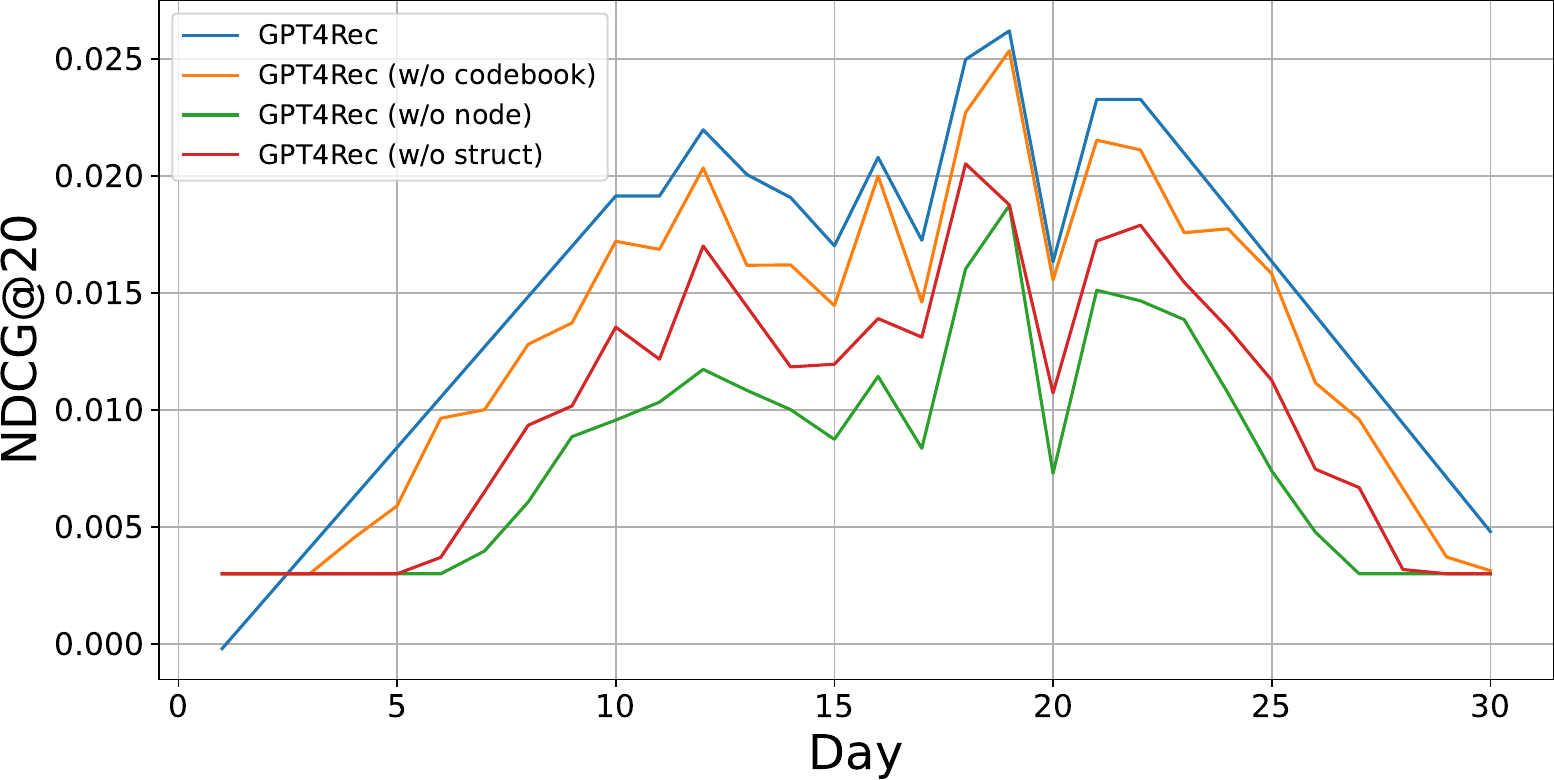}
\label{fig:ndcg}
}
\caption{Ablation study on Taobao2014 dataset.}
\label{fig:ablation}
\end{figure}

\subsection{Ablation Study}
In this section, we focus on GPT4Rec and test the efficacy of
its various designs in regard to the view disentangle, node-level prompt, structure-level prompt and view aggregation. The results are shown in Figure~\ref{fig:ablation}. We have the following observations:
\begin{itemize}[leftmargin=*,noitemsep,topsep=0pt]
    \item We evaluate the specific contribution of node-level prompts on adapting GPT4Rec to changes in individual nodes. By comparing versions of GPT4Rec with and without these prompts, we observe significant improvements in the model's response to shifts in user preferences and item attributes. These findings highlight the prompts’ role in providing context-specific adjustments to the model, enhancing its ability to personalize recommendations.
    \item The effectiveness of structure-level prompts is analyzed by assessing how well GPT4Rec adapts to overall structural changes in the graph. The comparison reveals that including structure-level prompts leads to a more accurate representation of the global interaction patterns, demonstrating their importance in capturing broader relationship dynamics in the graph.
    \item The view aggregation component is scrutinized. This aspect is crucial for reintegrating the disentangled views into a cohesive model output. The ablation study shows that effective aggregation is key to ensuring that the insights gained from the separate views are synergistically combined, leading to a more comprehensive understanding of the graph data.
    \item Finally, by comparing GPT4Rec's performance against baseline models that lack these features, it becomes evident that each component plays a vital role. The integration of view disentanglement, node-level prompts, structure-level prompts, and effective view aggregation sets GPT4Rec apart from its counterparts. This comprehensive design enables GPT4Rec to not only accurately capture the evolving nature of user-item interactions but also to do so in a way that is both efficient and scalable.
\end{itemize}

\subsection{Hyperparameter Study}
We conduct detailed hyperparameter studies on the hyperparameter of our model.

\begin{figure}[t]
\centering
\subfigure[Recall@20]{
\includegraphics[width=0.22\textwidth]{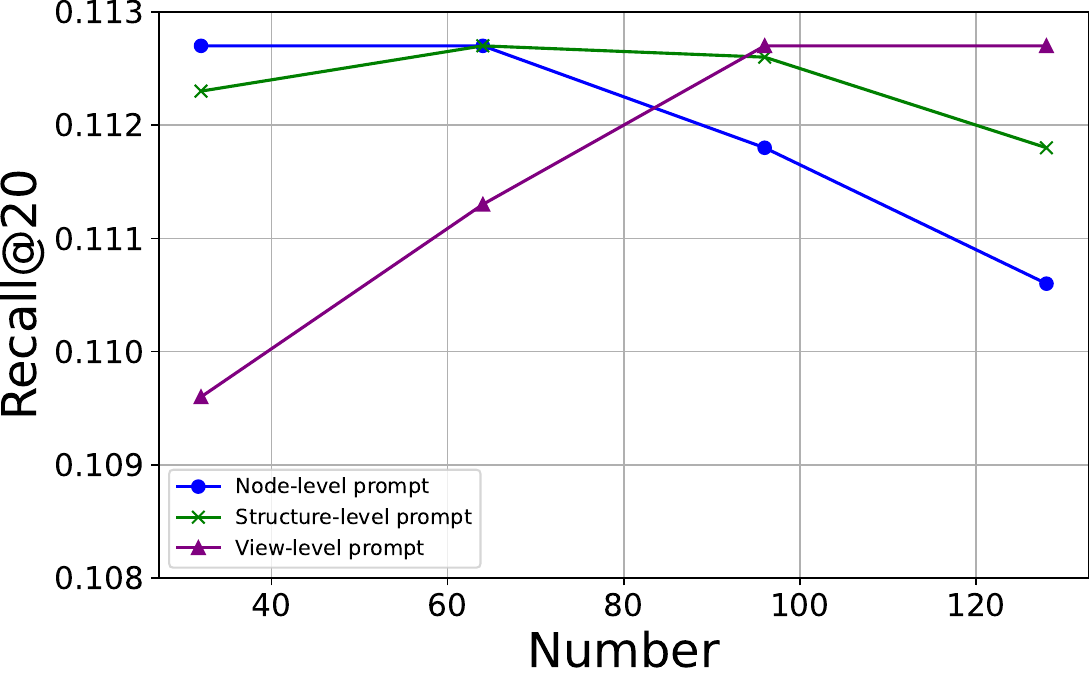}
\label{fig:recall}
} 
\subfigure[NDCG@20]{
\includegraphics[width=0.22\textwidth]{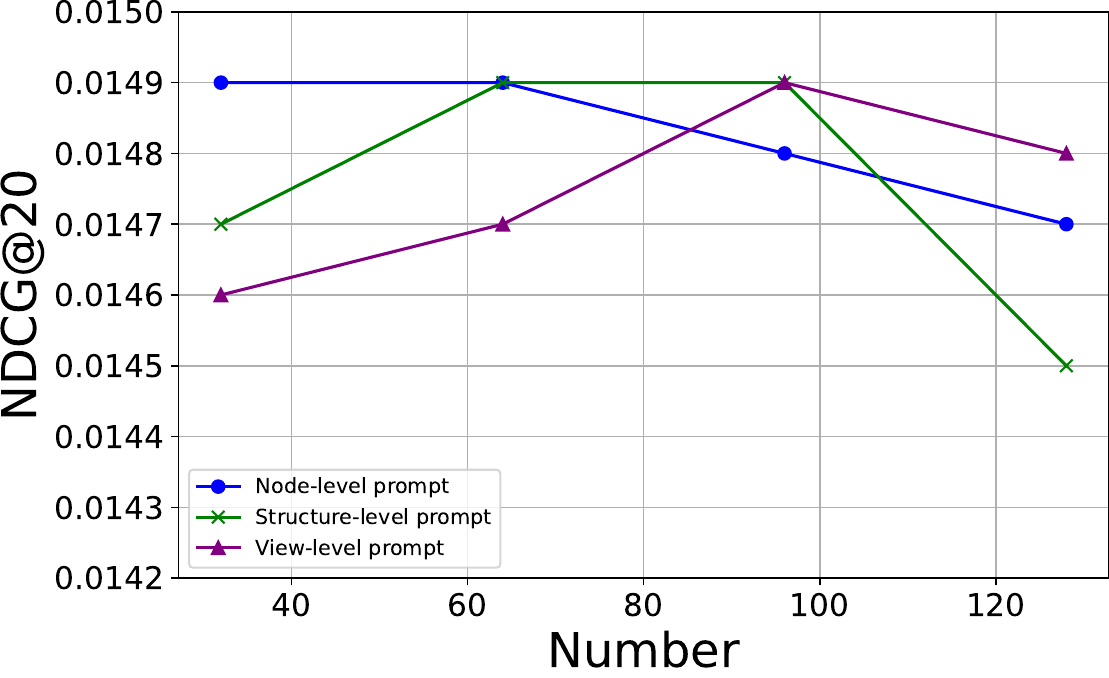}
\label{fig:ndcg}
}
\caption{Influence of Prompt Number on Taobao2014 dataset.}
\label{fig:prompt_size}
\end{figure}

\subsubsection{Prompt Size Study.} We first study the impact of varying the size of node-level prompts $P$ with size $L$, structure-level prompts $Q$ with size $K$, and cross-view-level prompts $Z$ with size $N$ on the Tb2014 dataset. The results are shown in Figure~\ref{fig:prompt_size}.

We find that initially increasing the size of these prompts generally enhances performance. Larger prompt sizes allow for a richer representation of complex user-item interactions and relationships, providing the model with a more diverse set of "hints" to interpret and integrate new information effectively. This improvement can be attributed to the model’s enhanced ability to capture and utilize the subtleties of user preferences and item characteristics. However, when the prompt size reaches a certain threshold, further increases only bring marginal benefits. Beyond this point, the additional capacity of the prompts does not significantly improve the model's interpretative power. Instead, the gains in performance become incremental and less impactful. Moreover, excessively large prompts introduce additional computational overhead, which may not be justifiable given the marginal gains in performance.

Interestingly, the optimal size of prompts also varies across different types: cross-view-level prompts require a larger size, while node-level prompts are most effective with a smaller size. This variation can be explained by considering the nature of changes each prompt type addresses. Cross-view-level prompts need to capture broader and more complex patterns of interaction across different graph views, which may necessitate a larger size for comprehensive representation. In contrast, node-level prompts, which target more specific and localized information, can achieve optimal performance with a smaller set of parameters.

\begin{figure}[t]
\centering
\includegraphics[width=0.3\textwidth]{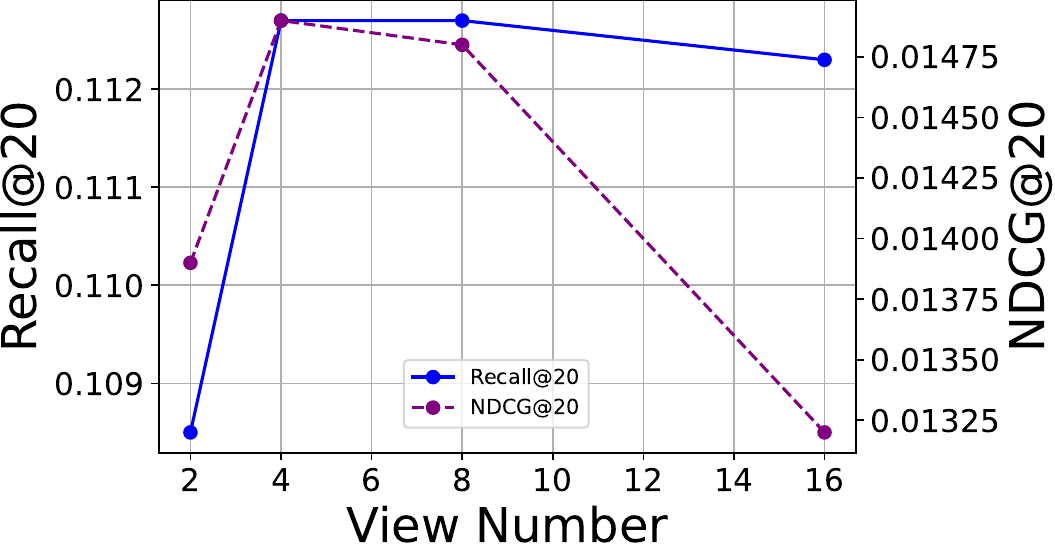}
\caption{Influence of View Number on Taobao2014 dataset.}
\label{fig:view_number}
\end{figure}

\subsubsection{View Size Study.} We explore the influence of the number of disentangled views on the Tb2014 dataset. The results are shown in Figure~\ref{fig:view_number}.  We find that the process of decomposing the graph into distinct views is a fundamental aspect of the GPT4Rec model. This disentanglement facilitates the model's understanding by allowing it to separately process diverse interaction dynamics between users and items. The results suggest that disentangling the graph into multiple views substantially improves the model’s capability to accurately represent and interpret the multifaceted relationship dynamics. Specifically, the disentangled views enable the model to address different aspects of the graph's structure in isolation, thus enhancing the quality of its recommendations by providing a more precise understanding of the interaction patterns.

In addition, an increase in the number of disentangled views generally corresponds to improved performance, primarily due to each view providing a distinct lens through which the model can perceive and process various aspects of user-item interactions. This multi-view approach enables a richer, more layered understanding of the data, as each view contributes unique insights into different facets of the graph, such as varying user preferences or item characteristics. 

However, our findings also highlight a critical threshold beyond which additional views may begin to hinder rather than help the model's performance. When the number of views crosses this threshold, the model encounters challenges in synthesizing and harmonizing these diverse perspectives. Additionally, managing an excessive number of views introduces significant computational challenges. The increased complexity not only escalates the computational costs but also amplifies the risk of the model overfitting. This issue underscores the importance of finding an optimal balance in the number of views, where the model can benefit from diverse perspectives without being overwhelmed by them or incurring prohibitive computational expenses.

\section{Conclusion}

In this paper, we propose GPT4Rec, a graph prompt tuning method for the continual learning in recommender systems. We propose to decouple the complex user-item interaction graphs into multiple semantic views, which enables the model to capture a wide range of interactions and preferences. The use of linear transformations in this decoupling process ensures that each view is distinctly represented while maintaining the overall structural integrity of the graph. The introduction of node-level, structure-level, and cross-view-level prompts in GPT4Rec is a significant methodological advancement. These prompts serve as dynamic and adaptive elements within the model, guiding the learning process and ensuring that the model remains responsive to new and evolving patterns within the graph. Extensive experiments validate our proposal.

\section*{Acknowledgements}
This work was supported by the Natural Science Foundation of China (No. 62372057, 62272200, 62172443, U22A2095).

\bibliographystyle{ACM-Reference-Format}
\balance
\bibliography{sample-base}


\end{document}